\newcolumntype{C}[1]{>{\centering\arraybackslash}m{#1}}
\newcommand\blfootnote[1]{%
  \begingroup
  \renewcommand\thefootnote{}\footnote{#1}%
  \addtocounter{footnote}{-1}%
  \endgroup}
\DeclareMathOperator{\conv}{\mathit{conv}}
\title{Rectilinear convex hull of points in 3D}
\titlerunning{Rectilinear convex hull of points in 3D}
\author{Pablo P\'erez-Lantero\inst{1}\thanks{Partially supported by projects CONICYT FONDECYT/Regular 1160543 (Chile),
    DICYT 041933PL Vicerrector\'ia de Investigaci\'on, Desarrollo e Innovaci\'on USACH (Chile), and Programa Regional STICAMSUD 19-STIC-02.}
    \and
    Carlos Seara\inst{2}\thanks{Research supported by projects MTM2015-63791-R MINECO/FEDER and Gen. Cat. DGR 2017SGR1640.}
    \and
    Jorge Urrutia\inst{3}\thanks{Research supported in part by SEP-CONACYT of Mexico, Proyecto 80268, and by PAPIIT IN102117 Programa de Apoyo a la Investigaci\'on e Innovaci\'on Tecnol\'ogica, UNAM.}
}
\authorrunning{P\'erez-Lantero et al.}
\institute{Departamento de Matem\'atica y Ciencia de la Computaci\'on, USACH, Chile, \email{pablo.perez.l@usach.cl}
  \and
  Departament de Matem\`{a}tiques, Universitat Polit\`{e}cnica de Catalunya, Spain, \email{carlos.seara@upc.edu}
\and
  Instituto de Matem\'aticas, Universidad Nacional Aut\'onoma de M\'exico, Mexico, \email{urrutia@matem.unam.mx}
}
\begin{document}
  \maketitle

\blfootnote{
 	\begin{minipage}[l]{0.25\textwidth}\includegraphics[trim=10cm 6cm 10cm 5cm,clip,scale=0.15]{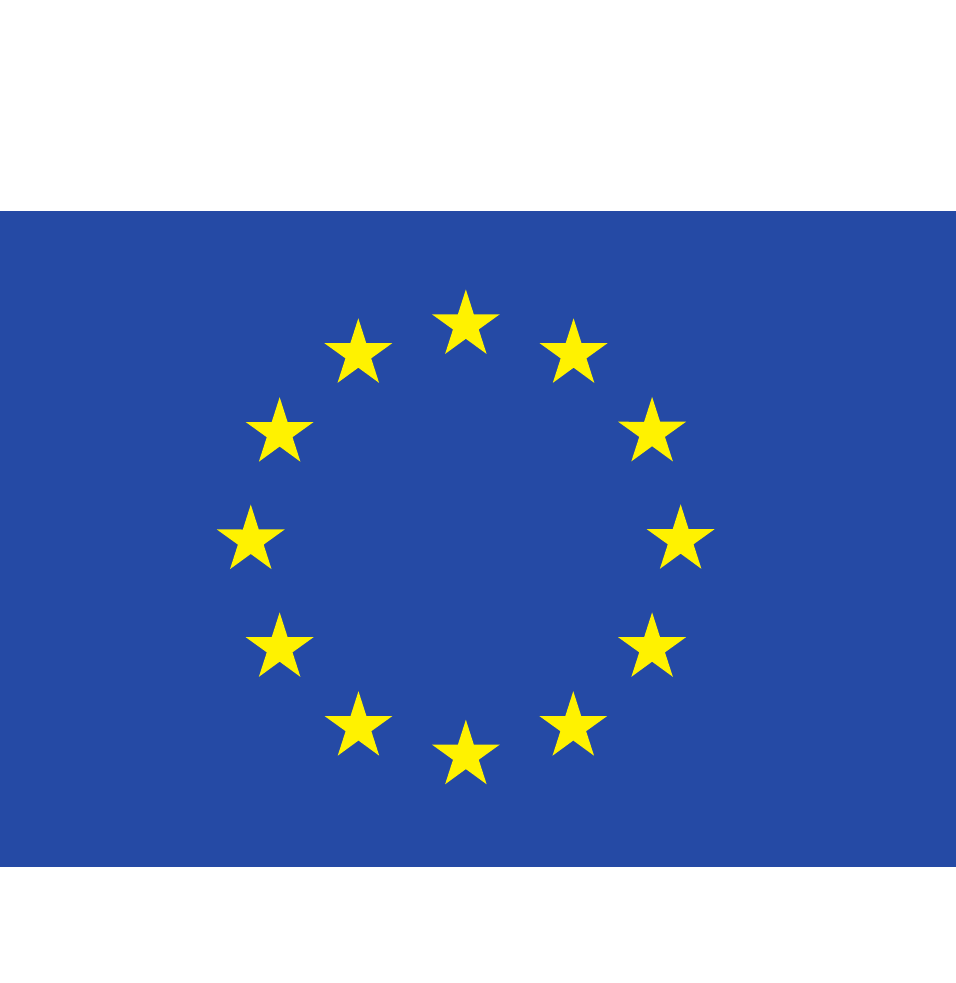}
 	\end{minipage}
 	\hspace{-1.5cm}
 	\begin{minipage}[l][1cm]{0.90\textwidth}This work has received funding from the European Union's Horizon 2020\\
       	research and innovation programme under the Marie Sk\l{}odowska-Curie\\ grant agreement No.\ 734922.
      \end{minipage}
}

\vspace{-0.9cm}


\newtheorem{observation}{Observation}
\newtheorem{invariant}{Invariant}



\begin{abstract}
Let $P$ be a set of $n$ points in $\mathbb{R}^3$ in general position, and let $RCH(P)$ be the rectilinear convex hull of $P$. In this paper we obtain an optimal $O(n\log n)$-time and $O(n)$-space algorithm to compute $RCH(P)$. We also obtain an efficient $O(n\log^2 n)$-time and $O(n\log n)$-space algorithm to compute and maintain the set of vertices of the rectilinear convex hull of $P$ as we rotate $\mathbb R^3$ around the $z$-axis. Finally we study some properties of the rectilinear convex hulls of point sets in $\mathbb{R}^3$.
\end{abstract}

\section{Introduction}\label{sec1}

Let $P$ be a set of $n$ points in the plane. An \emph{open quadrant} in the plane is the intersection of two open half-planes whose supporting lines are parallel to the $x$ and $y$-axes. An open quadrant is called $P$-free if it contains no points of $P$. The rectilinear convex hull of $P$ is the set
\[
  RCH(P)=\mathbb{R}^{2} \setminus\bigcup_{W(P) \in \mathcal{W}}W(P),
\]
where $\mathcal{W}$ denotes the set of all $P$-free open quadrants; see Figure~\ref{fig:rcht}, left.

The rectilinear convex hull of point sets has been studied mostly in the plane; e.g., see Ottmann et al.~\cite{ottmann_1984}, Alegr\'{i}a et al.~\cite{alegria_2014}, and Bae et al.~\cite{bae_2009}.

An open $\theta$-quadrant is the intersection of two open half-planes whose supporting lines are orthogonal, one of which when rotated clockwise by $\theta$ degrees becomes horizontal.

We define the $\theta$-rectilinear convex hull $RCH_{\theta}(P)$ of a point set $P$ as the set
\[
  RCH_{\theta}(P) ~=~ \mathbb{R}^{2} \setminus\bigcup_{W(P) \in \mathcal{W}_{\theta}}W(P),
\]
where $\mathcal{W}_{\theta}$ denotes the set of all $P$-free \emph{open $\theta$-quadrants}.

Note that $RCH_{\theta}(P)$ changes as $\theta$ changes. In fact, as $\theta$ changes from $0$ to $\frac{\pi}{2}$ there are $O(n)$ combinatorially different rectilinear convex hulls; see~\cite{alegria_2014,bae_2009}. Figure~\ref{fig:rcht} right shows an example of a $\theta$-rectilinear convex hull which happens to be disconnected.

\begin{figure}[ht]
  \begin{center}
  \includegraphics[width=0.8 \textwidth]{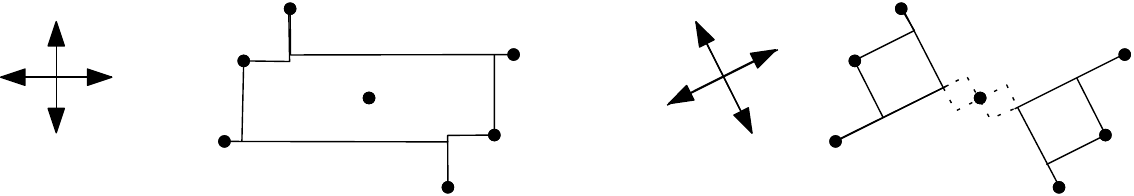}
  \end{center}
  \caption{Left: $RCH(P)$. Right: $RCH_{\pi/6}(P)$ of the same point set.}\label{fig:rcht}
\end{figure}

An open octant in $\mathbb{R}^3$ is the intersection of the three half-spaces, one perpendicular to the $x$-axis, one perpendicular to the $y$-axis, and another one perpendicular to the $z$-axis. As for the planar case, an octant is called $P$-free if it contains no elements of $P$. The rectilinear convex hull of a set of points in  $\mathbb{R}^3$ is defined as
\[
  RCH^3(P)=\mathbb{R}^{3} \setminus\bigcup_{W(P) \in \mathcal{W}}W(P),
\]
where $W$ denotes the set of all $P$-free open octants. In fact, in this paper and as an abuse of language, by $RCH^3(P)$ we will also denote the boundary of $RCH^3(P)$, and analogously for the similar definitions above. Thus, the rectilinear convex layers of a point set in $\mathbb{R}^3$ are defined recursively, as follows: calculate $RCH^3(P)$, and remove the elements of $P$ in $RCH^3(P)$.

\subsubsection{Results.}

In this paper we consider the rectilinear convex hull $RCH^3(P)$ of point sets in $\mathbb{R}^3$. We obtain an $O(n\log n)$ time and $O(n)$ space algorithm to calculate $RCH^3(P)$. We also give an $O(n\log^2 n)$ time and $O(n\log n)$ space algorithm to maintain the set of vertices of $RCH^3(P)$ as we rotate $\mathbb R^3$ around the $z$-axis. We present some results on the combinatorics of rectilinear convex hulls in $\mathbb R^3$ which are related to our algorithmic results, and interesting in their own right. In particular, we show that the rectilinear convex hull of a point set can change a quadratic number of times while its vertex set remains unchanged. Finally we present some open problems.

To avoid cumbersome terminology, from now on we will refer to $RCH^3(P)$ simply as $RCH(P)$.

\subsection{Previous work}

The study of the rectilinear convex hull of point sets in the plane is closely related to that of finding the set of maximal points of point sets under vector dominance. This problem was introduced by Kung et al.~\cite{kung_1975}, see also~\cite{preparata_1985}. They obtained an optimal $O(n\log n)$ time and $O(n)$ space algorithm to solve this problem in the plane and in the three-dimensional space. They also found algorithms to solve the maxima problem in higher dimensions
whose time complexity is $O(n\log^{d-2} n)$ for dimensions $d\ge 3$; however, it is not known whether their algorithm is optimal. Buchsbaum and Goodrich~\cite{buchsbaum04} obtained an algorithm to solve the three-dimensional layers of maxima problem in $O(n\log n)$ time and $O(n\log n/\log\log n)$ space in $\mathbb R^3$. Their algorithm is time optimal.

The rectilinear convex hull of a point set in the plane was first studied by Ottmann et al.~\cite{ottmann_1984} where they obtain an optimal $O(n\log n)$ time algorithm to calculate them. The reader may also consult the monograph Restricted-Orientation Geometry~\cite{fink_2004}, where they study topics related to the problems we study here. Other variants of the problems studied here were also treated~\cite{fink_2004,franek_2009}.

The rectilinear convex layers of a point set $P$ in Euclidean spaces are defined recursively, as follows: calculate the rectilinear convex hull of $P$, remove its elements from $P$, and proceed recursively until $P$ becomes empty. The rectilinear convex layers of point sets were first studied in Pel\'{a}ez et al.~\cite{pelaez_2013}, where an optimal $O(n\log n)$ time algorithm to calculate them was obtained.

Variants of the $RCH(P)$ that were considered in the plane are: computing and maintaining $RCH_{\theta}(P)$ when we rotate the coordinate axes around the origin, or determining the angle of rotation $\theta$ such that the area of $RCH_{\theta}(P)$ is minimized or maximized~\cite{alegria_2014,bae_2009}.

A point set is a \emph{rectilinear convex} set if all of its elements lie on the boundary of their rectilinear convex hull.
Erd\H{o}s-Szekeres type problems for finding rectilinear convex subsets of point sets were studied by Gonz\'alez-Aguilar et al.~\cite{GOPRSTU}. They obtained algorithms to find the largest rectilinear convex subset of a point set, as well as finding their largest rectilinear convex hole, that is, subsets of points of $P$ such that their rectilinear convex hull contains no element of $P$ in the interior.

\subsection{Notation and preliminaries}\label{sec1.1}

For a point $p\in\mathbb{R}^3$, we refer to $x_p$, $y_p$, and $z_p$ as the $x$-, $y$-, and $z$-coordinates of $p$, respectively. A point $p$ satisfies a sign pattern; e.g., $(+,-,+)$ if $x_p\geq 0$, $y_p\leq 0$, and $z_p\geq 0$. There are eight possible sign patterns that a point can satisfy, namely: $(+,+,+)$, $(-,+,+)$, $(-,-,+)$, $(+,-,+)$, $(+,+,-)$, $(-,+,-)$, $(-,-,-)$, and $(+,-,-)$. The first sign pattern corresponds to all of the points in the first \emph{octant} of $\mathbb{R}^3$. Similarly, we will say that the points satisfying the second pattern, $(-,+,+)$, correspond to points in the second octant, \ldots, and those satisfying $(+,-,-)$ correspond to the eighth octant of $\mathbb{R}^3$. The \emph{open} octants are defined in a similar way, except that we require strict inequalities; e.g., the first open octant corresponds to points for which $x_p>0$, $y_p>0$, and $z_p >0$. Given a point $p\in\mathbb{R}^3$, the octants with respect to $p$ are the octants induced by a translation of the origin to $p$.

In addition, each of the eight sign patterns defines a partial order on the elements of $\mathbb{R}^3$ as follows: consider two points $p,q\in\mathbb{R}^3$. We say that $p$ is dominates $q$ according to the sign pattern $(+,+,+)$ if
\[(x_q\le x_p) \wedge (y_q\le y_p) \wedge (z_q\le z_p).\]

We refer to this as $q\preceq_1 p$. In a similar way, we define the domination relations with respect to the other seven sign patterns, which we denote as $p\preceq_2 q$, $p\preceq_3 q$, $p\preceq_4 q$, $p\preceq_5 q$, $p\preceq_6 q$, $p\preceq_7 q$, and $p\preceq_8 q$. For example, the dominance relation with respect to the second sign pattern is:
\[q\preceq_2 p \; \Longleftrightarrow \; (x_q\ge x_p) \wedge (y_q\le y_p) \wedge (z_q\le z_p).\]
These dominance relations define partial orders in $P$, and they can be extended to any dimension $d>3$ in a straightforward way.

\begin{definition}
A point $p\in P$ is a \emph{maximal element} of $P$ with respect to a partial order if there is no $q\in P$ that dominates $p$.
\end{definition}

The partial order $\preceq_1$ defined by $(+,+,+)$ is usually known as \emph{vector dominance}~\cite{kung_1975}.

\section{Computing $RCH(P)$}

In this section we show how to calculate the rectilinear convex hull of point sets in $\mathbb R^3$. The problem of finding the maximal elements of a point set in $\mathbb{R}^2$ and $\mathbb{R}^3$ with respect to vector dominance is called the maxima problem. The next result is well known.

\begin{theorem}{\rm\cite{kung_1975}}\label{teo1}
The maxima problem in $\mathbb{R}^d$, $d=2,3$, can be solved in optimal $O(n\log n)$ time and $O(n)$ space.
\end{theorem}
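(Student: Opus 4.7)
The plan is to sketch the classical proof, handling the two dimensions separately and noting the matching lower bound.

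For the two-dimensional case, I would first sort the points of $P$ by $x$-coordinate in decreasing order, which costs $O(n\log n)$ time. Then I would sweep through the sorted list while keeping track of the largest $y$-coordinate $y^*$ seen so far. A point $p$ is maximal with respect to $\preceq_1$ if and only if, at the moment $p$ is processed, $y_p > y^*$; indeed, any later point (with strictly smaller $x$) cannot dominate $p$, while among earlier points one dominates $p$ precisely when some already-seen $y$-coordinate exceeds $y_p$. This sweep takes linear time and space, and combined with the sort gives the claimed $O(n\log n)$ time and $O(n)$ space.

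For the three-dimensional case, I would sort $P$ by $z$-coordinate in decreasing order in $O(n\log n)$ time and then process points in this order. The key invariant is to maintain, in a balanced BST $T$ keyed by $x$-coordinate, the staircase of $xy$-maximal points among those already processed; this is the set of points $q$ such that no other processed point beats $q$ in both $x$ and $y$. When the next point $p$ is processed, I test in $O(\log n)$ time whether $p$ is $xy$-dominated by some point of $T$ (equivalently, whether the predecessor of $p$ in $T$ by $x$-coordinate has $y$-coordinate larger than $y_p$). Since all points in $T$ have strictly larger $z$-coordinate than $p$, the point $p$ is $\preceq_1$-maximal in $P$ precisely when this test fails. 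Whether or not $p$ is maximal, I then insert $p$ into $T$ and remove every point of $T$ that $p$ now $xy$-dominates, so as to restore the staircase. Since each point is inserted and deleted from $T$ at most once, the total cost of all updates is $O(n\log n)$, and $T$ uses $O(n)$ space.

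The main obstacle, conceptually, is arguing that the staircase-maintenance step is both correct (the new maxima structure really is the staircase of the processed prefix) and amortized to $O(n\log n)$; the deletions are charged to the corresponding insertions, so the amortization is immediate once one observes that a deleted point never re-enters $T$. The correctness of the dominance test reduces to the fact that the staircase of a planar point set, sorted by $x$, has $y$-coordinates strictly decreasing, so the unique candidate dominator of $p$ is its $x$-predecessor in $T$.

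Finally, optimality follows from an $\Omega(n\log n)$ lower bound in the algebraic decision tree model: given $n$ real numbers $a_1,\dots,a_n$ whose sorted order one wishes to determine, the lifted points $(i,a_i)$ in the plane (or $(i,a_i,0)$ in space) have all of $P$ maximal, and reading off the maxima in decreasing $x$-coordinate order recovers the sorted sequence; hence computing the set of maximal points is at least as hard as sorting, matching the upper bound.
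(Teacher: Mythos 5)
The paper gives no proof of this statement: it is quoted verbatim from Kung, Luccio, and Preparata, so there is no internal argument to compare yours against. Your plan is the standard textbook one --- a plane sweep for $d=2$, a sweep by decreasing $z$-coordinate maintaining the staircase of $xy$-maxima in a balanced tree for $d=3$, and a reduction from sorting for the lower bound --- and the overall structure is sound. However, two concrete steps are wrong as written.

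First, in the three-dimensional dominance test you query the \emph{predecessor} of $p$ by $x$-coordinate in $T$. A point with $\preceq_1$-dominance over $p$ must have $x$-coordinate at least $x_p$, so the predecessor (which has smaller $x$) can never be the dominator. Since the staircase, read in increasing $x$, has strictly decreasing $y$, the unique candidate dominator is the \emph{successor} of $x_p$ in $T$: among staircase points with $x\ge x_p$ it is the one of largest $y$. The test should compare $y_p$ against the successor's $y$-coordinate; your closing paragraph repeats the same predecessor claim, so this is a consistent error rather than a typo. Second, the lower-bound lifting $a_i\mapsto(i,a_i)$ does not make every point maximal: for $a=(1,2)$ the point $(1,1)$ is dominated by $(2,2)$, so the maxima output reveals nothing about the sorted order of a general input. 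The standard fix is to lift $a_i$ to $(a_i,-a_i)$ (or $(a_i,-a_i,0)$ in $\mathbb{R}^3$), for which no point dominates another, and to note that reporting the maxima in $x$-order yields the sorted sequence; one should also say explicitly that the algorithm is assumed to report the staircase in sorted order, since the reduction needs that ordering. Both errors are repairable, but as stated the correctness of the 3D test and the validity of the reduction both fail.
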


In fact, when solving the maxima problem, we obtain the set of faces and vertices of the \emph{orthogonal
polyhedron}, i.e., the faces meet at right angles and edges are parallel to the axes; call it $\mathcal P^1$ as shown in Figure~\ref{fig:overlap1} left. In the right part of Figure~\ref{fig:overlap1} we show the top view of $\mathcal P^1$. Observe that if we intersect a horizontal plane $\mathcal{Q}_c$ with equation $z=c$ with $\mathcal P^1$, we obtain an orthogonal polygon $\mathcal{C}^1_c$ that will change as we move $\mathcal{Q}_c$ up or down; i.e., as we increase or decrease $c$.

\begin{figure}[ht]
  \begin{center}
  \includegraphics[width=0.7\textwidth]{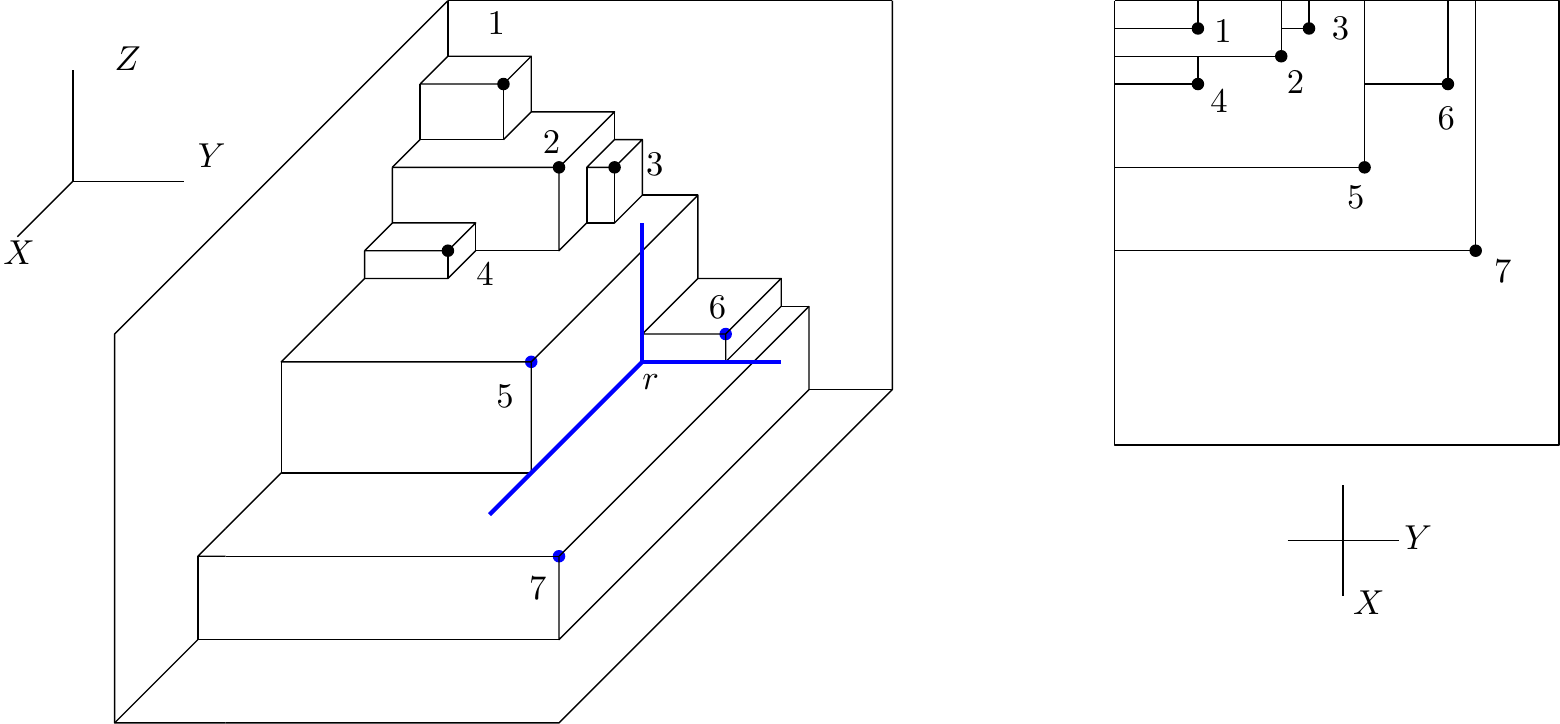}
  \end{center}
  \caption{Left: Maxima points in the first octant with the topmost antenna in $p_1$, and an extremal first open octant in blue. Right: The projection on the $XY$ plane of the the first octant maxima point set.}\label{fig:overlap1}
\end{figure}

In Figure~\ref{changeC} we show how $\mathcal{C}^1_c$ changes as we scan it from top to bottom starting at the top vertex of $\mathcal{P}^1$. We show the intersection of $\mathcal{P}^1$ with $\mathcal{Q}_c$ as $\mathcal{Q}_c$ sweeps through the first four top vertices of $\mathcal{P}^1$. It is easy to see that when $\mathcal{Q}_c$ moves from one vertex of $\mathcal{P}^1$ to the next, $\mathcal{C}^1_c$ changes in the following way: a new vertex appears, which is the vertex of an \emph{elbow} from which two rays emanate, one horizontal and one vertical, that extend until they hit $\mathcal{C}^1_c$ or go to infinity; see Figure~\ref{changeC}.

\begin{figure}[ht]
  \begin{center}
  \includegraphics[width=1.0\textwidth]{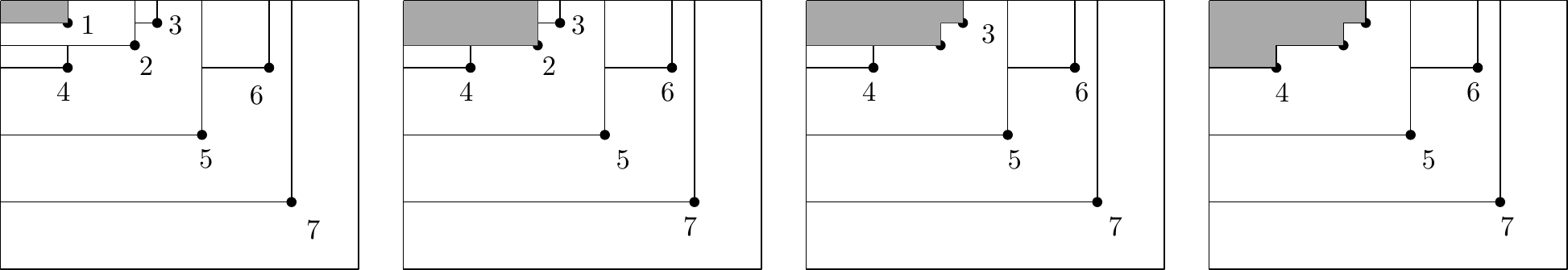}
  \end{center}
  \caption{How $\mathcal{C}^1_c$ changes as $\mathcal{Q}_c$ moves down from point $1$ to point $4$.}\label{changeC}
\end{figure}

Analogously, we can define $\mathcal P^i$, $\mathcal C^i_c$, $i=2,\ldots,8$. Since $RCH(P)=\bigcap_{i=1,\ldots,8} \mathcal{P}^i$, we have the following.

\begin{theorem}
For each constant $c$, the intersection of $RCH(P)$ with $\mathcal{Q}_c$ is the intersection of the orthogonal polygons $\mathcal{C}^i_c$, $i=1,\ldots,8$.
\end{theorem}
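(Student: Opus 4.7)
The plan is to observe that the statement is essentially a one-line set-theoretic manipulation once the displayed identity $RCH(P)=\bigcap_{i=1}^{8}\mathcal{P}^i$ above the theorem is in hand. So the proof reduces to commuting the intersection with $\mathcal{Q}_c$ through the eight-fold intersection of the orthogonal polyhedra.

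First, I would write
\[
RCH(P)\cap \mathcal{Q}_c \;=\; \Bigl(\bigcap_{i=1}^{8} \mathcal{P}^i\Bigr)\cap \mathcal{Q}_c \;=\; \bigcap_{i=1}^{8}\bigl(\mathcal{P}^i\cap \mathcal{Q}_c\bigr) \;=\; \bigcap_{i=1}^{8}\mathcal{C}^i_c,
\]
where the first equality is the identity recalled above, the middle equality is the standard distributivity of binary intersection over arbitrary intersections, and the last is the definition $\mathcal{C}^i_c:=\mathcal{P}^i\cap\mathcal{Q}_c$ given in the preceding paragraphs.

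If a self-contained argument were required, I would also spell out why $RCH(P)=\bigcap_{i=1}^{8}\mathcal{P}^i$: each $P$-free open octant has one of eight orientations (sign patterns), so the family $\mathcal{W}$ of $P$-free open octants partitions as $\mathcal{W}=\bigcup_{i=1}^{8}\mathcal{W}_i$, and hence
\[
\bigcap_{i=1}^{8}\mathcal{P}^i \;=\; \bigcap_{i=1}^{8}\Bigl(\mathbb{R}^3\setminus\bigcup_{W\in\mathcal{W}_i} W\Bigr) \;=\; \mathbb{R}^3\setminus\bigcup_{i=1}^{8}\bigcup_{W\in\mathcal{W}_i} W \;=\; RCH(P).
\]
Intersecting both sides with $\mathcal{Q}_c$ then gives the theorem.

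There is no real obstacle here; the content of the statement is entirely notational, and it serves mainly to set up the slab-based sweep used in the algorithm that follows. The only mild subtlety worth noting is that $RCH(P)$ in this paper denotes both the region and its boundary, so one should read the intersection $RCH(P)\cap\mathcal{Q}_c$ in whichever of these two senses one prefers — the identity holds either way, since intersecting with $\mathcal{Q}_c$ commutes with taking topological boundaries inside the slab.
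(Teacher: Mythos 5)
Your proposal is correct and matches the paper's reasoning exactly: the paper offers no explicit proof, simply asserting $RCH(P)=\bigcap_{i=1,\ldots,8}\mathcal{P}^i$ and presenting the theorem as an immediate consequence of intersecting with $\mathcal{Q}_c$, which is precisely your computation (your justification of that identity via the partition of $\mathcal{W}$ by sign pattern is a welcome extra). The only blemish is your closing aside --- intersecting with a plane does not in general commute with taking boundaries (e.g.\ when $\mathcal{Q}_c$ contains a horizontal face of $RCH(P)$) --- but this does not affect the theorem, which concerns the regions.
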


To compute $RCH(P)$ we will sweep a plane $\mathcal{Q}_c$ from top to bottom, stopping at each point of $P$ on $RCH(P)$. Each time we stop, we need to update $RCH(P)=\bigcap_{i=1,\ldots,8} \mathcal{P}^i$. We claim that we can do this in $O(\log n)$ time. To prove this, observe that when we move $\mathcal{Q}_c$ from a point $p$ of $P$ to the next point, say $q$, the only curves $\mathcal{C}^i_c$, $i=2,\ldots,8$ that change are those containing $q$, and these can be recomputed in $O(\log n)$ time.

As a consequence of the discussion above we have the following result.

\begin{theorem}\label{teo2}
Given a set $P$ of $n$ points in 3D, the rectilinear convex hull of $P$, $RCH(P)$, can be computed in optimal $O(n\log n)$ time and $O(n)$ space.
\end{theorem}

\section{Maintaining $RCH_{\theta}(P)$}\label{sec3}

In the plane, the problem of maintaining $RCH_{\theta}(P)$ as $\theta$ changes from $0$ to $2 \pi $ has been studied~\cite{alegria_2014,bae_2009}. In this section we will study this problem in $\mathbb{R}^{3}$ restricted to rotations of $\mathbb{R}^{3}$ around the $z$-axis. Thus, in the rest of this section we will use octants defined as intersections of three mutually orthogonal semi-spaces whose supporting planes are orthogonal to three mutually orthogonal lines through the origin, one of which is the $z$-axis. Thus, two of these three lines lie on the $XY$-plane, and correspond to rotations of the $x$- and $y$-axis by an angle $\theta$ in the clockwise direction. We call such octants $\theta$-octants, and the corresponding rectilinear convex hulls generated $RCH_{\theta}(P)$. In the rest of this section we will assume that elements of $P$ are labeled $\{p_1,\ldots,p_n\}$ from top to bottom according to their $z$-coordinate.

For every  $p\in\mathbb{R}^{3}$ there are eight $\theta$-octants having $p$ as their apex; we will call them $p^{\theta}$-octants. Exactly four $p^{\theta}$-octants contain points in $\mathbb{R}^{3}$ above the horizontal plane $\lambda_p$ through $p$, and the other four have points below $\lambda_p$. We call the first four \emph{up $p^{\theta}$-octants}, and the other \emph{down $p^{\theta}$-octants}. Note that if an up $p^{\theta}$-octant is no-$P$-free, it contains elements of $P$ above $\lambda_p$, and that non-$P$-free down $p^{\theta}$-octants contain points in $P$ below $\lambda_p$.

Observe that a point $p\in P$ is a vertex of $RCH_{\theta}(P)$ if there is a $P$-free $p^{\theta}$-octant. In this case we will say that $p$ is a $\theta$-active point, otherwise $p$ is $\theta$-inactive. Furthermore, if there is a $P$-free up $p^{\theta}$-octant, we call $p$ an up $\theta$-active vertex. We define down $\theta$-active vertices in a similar way.

We first analyze the set of angles for which points in $P$ are up $\theta$-active. Let $p_i$ be a point of $P$, and consider the orthogonal projection onto $\lambda_{p_i}$ of the points $p_1,\ldots,p_{i-1}$ (the points in $P$ above $\lambda_{p_i}$), and let $P'_i$ be the point set thus obtained; see Figure~\ref{fig:overlap4} left. If for some $\theta$, $p_i$ is up $\theta$-active, then there is a wedge of angular size $\theta_{p_i}$ at least $\frac{\pi}{2}$ on $\lambda_{p_i}$ whose apex is $p_i$, that is $P'_i$-free; see Figure~\ref{fig:overlap4} right. Clearly, no more than three such disjoint wedges can exist. In a similar way, we can prove that $p_i$ is down active in at most three angular intervals. Thus, the following result, equivalent to a result in Avis et al.~\cite{theta-maxima_1999} for points on the plane, follows.

\begin{theorem} \label{threeintervals}
The set of angles $\theta$ for which a point $p_i\in P$ is active consists of at most six disjoint intervals in the set of directions $[0,2\pi]$.
\end{theorem}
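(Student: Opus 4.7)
The plan is to split the $\theta$-active set of $p_i$ into its up-active and down-active parts and to prove each is a union of at most three disjoint angular intervals, so that their union has at most six. For the up-active part I use the projection $P'_i$ already introduced in the text: each up $p_i^\theta$-octant projects orthogonally onto $\lambda_{p_i}$ to a wedge of angular size $\pi/2$ with apex $p_i$, and the four up-octants project to the four wedges of a cross. Because a $\pi/2$-rotation of $\theta$ simply permutes these wedges, the up-active condition is $\pi/2$-periodic in $\theta$, so I analyse it on the quotient circle $[0,2\pi]/\tfrac{\pi}{2}\mathbb Z$ of cross orientations.

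Next I turn up-activeness into a wedge-fitting problem. The angular positions of the points of $P'_i$ around $p_i$ cut the circle of directions at $p_i$ into cyclic gaps whose sizes sum to $2\pi$. A wedge of the cross is $P'_i$-free exactly when it sits inside such a gap, which forces that gap to have angular extent at least $\pi/2$; call such gaps \emph{wide}. Three disjoint wide gaps would already account for angular measure at least $3\pi/2<2\pi$, so there are at most three of them, which is the observation quoted just before the theorem. A wide gap of extent $g$ contributes exactly one interval of length $g-\pi/2$ of admissible cross orientations on the quotient circle; hence the set of up-active $\theta$ is a union of at most three disjoint angular intervals.

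A mirror argument below $\lambda_{p_i}$, applied to the orthogonal projection onto $\lambda_{p_i}$ of the points of $P$ lying strictly below $\lambda_{p_i}$ and to the four down $p_i^\theta$-octants, gives at most three disjoint intervals of down-active $\theta$. Since $p_i$ is $\theta$-active if and only if it is up-active or down-active, the union of these two sets has at most $3+3=6$ disjoint intervals, as claimed. The main obstacle I expect is the bookkeeping step that pairs each wide gap with a single interval of cross orientations on the rotation circle; once that correspondence is spelled out precisely, the gap-count inequality gives both the bound of three intervals per side and, by taking the union, the bound of six for the full active set.
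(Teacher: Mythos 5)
Your proof follows essentially the same route as the paper's own (informal) argument: split activity into up- and down-activity, project the points above (respectively, below) $\lambda_{p_i}$ onto $\lambda_{p_i}$, observe that up-$\theta$-activity forces a $P'_i$-free wedge with apex $p_i$ and angle at least $\pi/2$, and bound the number of maximal such wedges by three on each side. Two remarks. First, the inequality you invoke --- ``three disjoint wide gaps account for at least $3\pi/2<2\pi$'' --- does not by itself cap the count at three; what you need is that \emph{four} disjoint gaps of size at least $\pi/2$ would consume angular measure at least $2\pi$, forcing exactly four projected points in mutually perpendicular directions from $p_i$, a degenerate configuration excluded by general position. This is a one-line fix, and the paper itself only asserts the bound with ``clearly.'' Second, your quotient-circle formalization actually exposes a counting subtlety that the paper glosses over: an interval of cross orientations on $[0,2\pi]/\tfrac{\pi}{2}\mathbb{Z}$ lifts to up to four disjoint intervals of the rotation angle $\theta$ in $[0,2\pi]$, so if the theorem's ``six disjoint intervals'' are read literally as intervals of $\theta$, the bound your argument (and the paper's) yields is $4\cdot(3+3)=24$, not $6$. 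The intended count is of the at most $3+3$ maximal free wedges (gaps), each of which the paper's own subsequent algorithm converts into ``a set of intervals'' of $\theta$. Since the source shares this imprecision, it is not a defect of your proof relative to the paper, but you should say explicitly on which circle your final six intervals live.
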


\begin{figure}[ht]
  \begin{center}
  \includegraphics[width=0.75\textwidth]{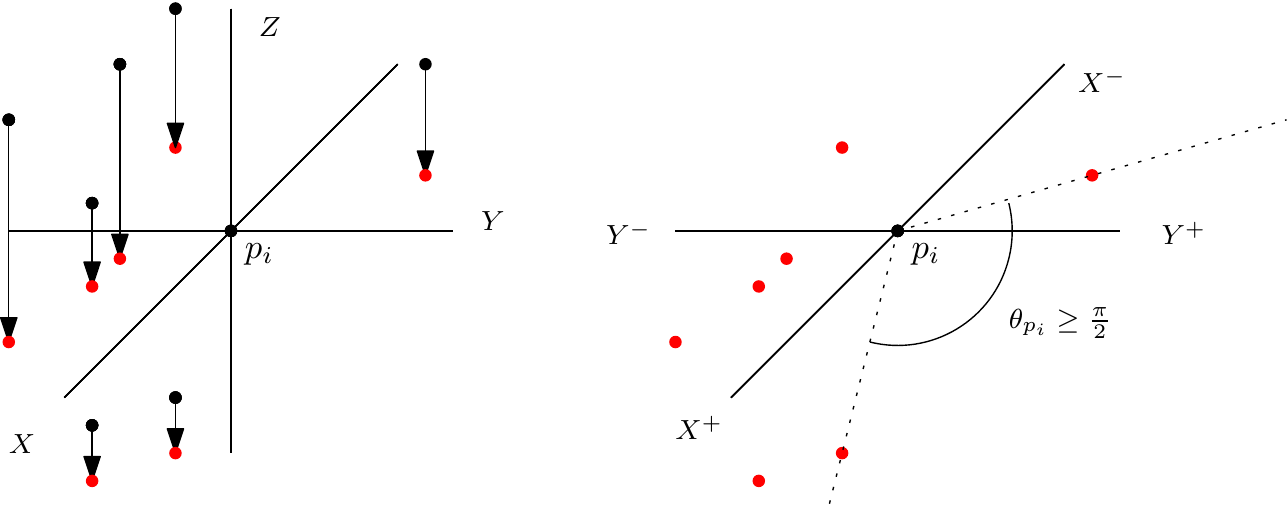}
  \end{center}
  \caption{Checking whether the point $p_i$ is maximal with respect to the first octant for some angular interval: project the set of points $\{p_1,p_2,\dots,p_{i-1}\}$ on the plane parallel to the $XY$ plane passing through $p_i$, and obtain the projected points (red points in figure).}\label{fig:overlap4}
\end{figure}

Finding the angle intervals at which $p_i$ is up-active is now reduced to finding, if they exist, $P'_i$-free wedges in $\lambda_{p_i}$ whose apex is $p_i$ of angular size at least $\frac{\pi}{2}$. We solve this as follows: note first that if one such wedge exists, it has to contain at least one of the four rays emanating from $p_i$ parallel to the $x$- or $y$-axis. Let those rays be $X_i^+$, $X_i^-$, $Y_i^+$, and $Y_i^-$; see Figure~\ref{fig:overlap4} right. For $X_i^+$, we will solve the following problem.

Rotate $X_i^+$ clockwise until it hits a point in $P'_i$. Next, rotate $X_i^+$ clockwise until it hits another point in $P'$. Measure the angle $\alpha$ formed by the two rays thus obtained. If $\alpha\geq\frac{\pi}{2}$ then we have found a set of intervals at which $p_i$ is active, else discard $X_i^+$. Proceed in the same way with $X_i^-$, $Y_i^+$, and $Y_i^-$. We will have to repeat this process for all of the points $p_i\in P$ from top to bottom. We now show how to process all the points in $P$ in $O(n\log^2 n)$ time and $O(n\log n)$ space.

The main difficulty in finding the wedges in the above discussion is that as we process the points of $P$ from top to bottom, the number of points in $\lambda_{p_i}$ increases one by one, and thus, we need a dynamic data structure to solve the following problem.

\begin{problem}\label{wedges}
Let $Q=\{q_1,\ldots,q_n\}$ be a set of points in the plane, and let $Q_{i-1}=\{q_1,\ldots,q_{i-1}\}$. For each $i$ we want to solve the following problem: let $r_i$ be the vertical ray that starts at $q_i$ and points up. Find the first point $a_i$ (respectively, $b_i$) of $Q_{i-1}$ that $r_i$ meets when we rotate it in the clockwise (respectively, counter-clockwise) direction around $q_i$.
\end{problem}

One last point before proceeding with our results; instead of projecting the points of $P$ on the planes $\lambda_{p_i}$, we will project them one by one on the $XY$-plane. Everything else remains unchanged. The next observation on binary trees will be useful.

\begin{observation} \label{partition}
Let $\mathcal{T}$ be a balanced binary tree. For each node of $v\in\mathcal{T}$ let $S_v$ be the set of leaves of $\mathcal{T}$ that are descendants of $v$. Let $u$ be a leaf of $\mathcal{T}$, and consider the path $p_u$ that joins $u$ to the root of $\mathcal{T}$, and let $\Delta_u$ be the set of nodes of $\mathcal{T}$ that are direct descendants of a node in $p_u$. Then, the sets $S_v$, $v\in\Delta_u$ induce a partition of the set of leaves of $\mathcal{T}\setminus \{u\}$. Moreover if a node $w$ in $\Delta_u$ is the right (left) descendant of a node in $p_u$, then the elements of $S_w$ lie to the right (respectively, left) of $v$; see Figure~\ref{fig:overlap611}.
\end{observation}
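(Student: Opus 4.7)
The plan is to verify both claims via the standard ``spine decomposition'' of a rooted binary tree with respect to a root-to-leaf path. Writing $p_u = u_0, u_1, \ldots, u_k$ with $u_0 = u$ and $u_k$ the root, every non-root node on $p_u$ has a unique sibling, so $\Delta_u$ is precisely the set $\{w_1,\ldots,w_k\}$, where $w_j$ denotes the sibling of $u_{j-1}$ under $u_j$. This reformulation is what makes the rest transparent.

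For the partition claim, first I would observe that the subtrees $S_{w_1}, \ldots, S_{w_k}$ are pairwise disjoint: no $w_j$ is an ancestor of another, because their parents $u_1,\ldots,u_k$ lie on a single root-path and each $w_j$ hangs off that path. To show the $S_{w_j}$ cover all leaves except $u$, pick any leaf $\ell \neq u$ and let $u_j$ (with $j\geq 1$) be the lowest common ancestor of $\ell$ and $u$ in $\mathcal{T}$. Then $\ell$ does not descend through $u_{j-1}$, and hence must descend through the other child of $u_j$, namely $w_j$, giving $\ell \in S_{w_j}$. The index $j$ is forced by the disjointness, so the decomposition is a partition.

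For the left/right statement, reading the ``$v$'' at the end as a typographic slip for ``$u$'', I would use the in-order convention that identifies the leaves of $\mathcal{T}$ with a left-to-right linear order (this is the standard assumption in which such a balanced tree stores an ordered point set, as in Problem~\ref{wedges}). At each internal node $u_j$ on $p_u$, all leaves in the left subtree of $u_j$ precede all leaves in the right subtree; since $u$ descends through $u_{j-1}$ and $w_j$ is the sibling of $u_{j-1}$, the leaves of $S_{w_j}$ and the leaf $u$ occupy the two sides of $u_j$ consistently with whether $w_j$ is $u_j$'s left or right child. A short induction on $j$ (or equivalently, transitivity of the global left-right order inherited from the in-order traversal) promotes this local comparison to the global statement, so $S_{w_j}$ lies entirely to the right of $u$ when $w_j$ is a right child, and to the left of $u$ when $w_j$ is a left child.

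No step presents a genuine obstacle; the only point that requires care is fixing the convention that the leaves of $\mathcal{T}$ carry a left-to-right order, which is implicit in the way the observation is invoked in the algorithm that follows.
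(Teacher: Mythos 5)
Your argument is correct: the spine decomposition into the off-path siblings $w_1,\ldots,w_k$, disjointness plus the LCA-based coverage argument, and the in-order left/right comparison at each $u_j$ together establish both claims, and your readings of $\Delta_u$ (as the siblings hanging off the path, not the path nodes themselves) and of the trailing ``$v$'' as a slip for ``$u$'' are the intended ones. The paper states this as an Observation and gives no proof at all, so there is nothing to diverge from; your write-up simply supplies the standard justification the authors took for granted.
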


\begin{figure}[ht]
  \begin{center}
  \includegraphics[width=0.8\textwidth]{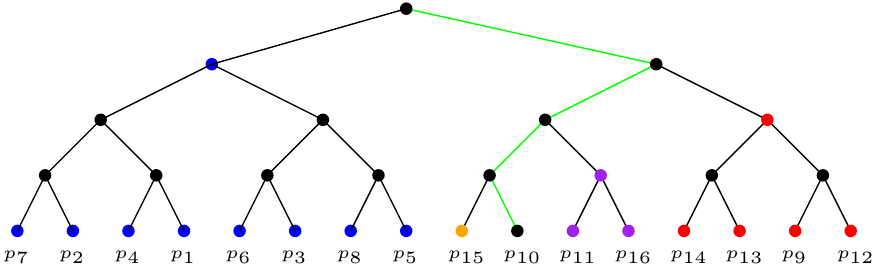}
  \end{center}
  \caption{}\label{fig:overlap611}
\end{figure}

\begin{theorem}\label{teo3}
Problem~\ref{wedges} can be solved in $O(n\log^2 n)$ time and $O(n\log n)$ space.
\end{theorem}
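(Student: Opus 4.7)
The plan is to use the binary-tree scaffolding of Observation~\ref{partition} to reduce each step of the algorithm to $O(\log n)$ angular queries on static one-sided subtrees. I would first build a balanced binary tree $\mathcal{T}$ whose leaves are the points of $Q$ sorted by $x$-coordinate; this is possible up front because the entire set $Q$ is given in advance. At every internal node $v$ I maintain a data structure $D_v$ storing the currently \emph{active} points of $S_v$, i.e.\ those $q_j\in S_v$ with $j<i$ at the time query $i$ is served. Each $D_v$ is taken to be an incremental dynamic convex hull (e.g.\ the insertion-only variant of Overmars--van Leeuwen), supporting both insertions and tangent queries from an external point in $O(\log n)$ time.

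I would then process $q_1,\ldots,q_n$ in order, answering the query for $q_i$ first and only then inserting $q_i$ into $D_v$ for every node $v$ on the path from the leaf $u$ of $q_i$ up to the root. To answer the query, Observation~\ref{partition} partitions the active subset of $Q_{i-1}$ into $O(\log n)$ pieces, one inside each $S_w$ with $w\in\Delta_u$, and each piece lies entirely to one side of $q_i$ in $x$-coordinate. For a right-side $S_w$, the smallest clockwise angle from $r_i$ under which $q_i$ sees an active point of $S_w$ is attained at the vertex of the convex hull of those active points where the upper tangent from $q_i$ touches; that vertex is returned by a single $O(\log n)$ tangent query on $D_w$. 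Symmetric queries handle left-side subtrees and the counter-clockwise direction, yielding candidates for $b_i$. The global $a_i$ and $b_i$ are then selected by a direct angular comparison among the $O(\log n)$ candidates produced by the off-path subtrees.

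The complexity follows by summation: over the $n$ points we perform $O(n\log n)$ insertions and $O(n\log n)$ tangent queries on the structures $D_v$, each in $O(\log n)$ time, giving total time $O(n\log^2 n)$; and since every point appears in the structures of its $O(\log n)$ ancestors in $\mathcal{T}$, the total space is $O(n\log n)$. The main obstacle I expect is pinning down the correct tangent primitive---checking that the first active point hit by the rotating ray inside a one-sided $S_w$ really is the hull vertex returned by the tangent query, which requires a short case analysis on whether the subtree's points lie above or below $q_i$---after which the argument is routine bookkeeping on top of the partition scheme of Observation~\ref{partition}.
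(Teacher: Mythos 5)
Your proposal is correct and follows essentially the same route as the paper: a balanced binary tree over $Q$ sorted by $x$-coordinate, per-node convex hulls of the active descendants updated along the leaf-to-root path, and the query answered by tangent (supporting-line) computations on the $O(\log n)$ one-sided off-path hulls given by Observation~\ref{partition}, with the same $O(n\log^2 n)$ time and $O(n\log n)$ space accounting. The only difference is that you name a concrete insertion-only dynamic hull structure where the paper leaves the per-node update unspecified at the same logarithmic cost.
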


\begin{proof}
Assume without loss of generality that no two points of $Q$ lie on a horizontal line, and let ${\mathit{left}}_i$ and ${\mathit{right}}_i$ be the set of points in $P$ lying to the left (respectively, right) of the vertical line through $q_i$.
If we know the convex hulls of ${\mathit{left}}_i$ and ${\mathit{right}}_i$, then the points we are seeking, $a_i$ and $b_i$,
can be computed by calculating the supporting lines of the convex hull of ${\mathit{left}}_i$ and ${\mathit{right}}_i$ passing through $q_i$. It is well known that we can compute these lines in $O(\log n)$ time.

Furthermore, if ${\mathit{left}}_i$ and ${\mathit{right}}_i$ have been decomposed into $k$ disjoint sets $W_i,\ldots,W_k$ where each $W_i$ is contained in ${\mathit{left}}_i$ or in ${\mathit{right}}_i$, and we have the convex hull $\conv(W_i)$ of all of these point sets, then we can find the supporting lines through $q_i$  for all of them in overall $O(\log|W_1|+\dots+\log|W_k|)=O(\log^2 n)$ time; see Figure~\ref{fig:overlap6_12}. This will now allow us to obtain $a_i$ and $b_i$ in $O(\log n)$ time. This is the main idea that will enable us to design a data structure to solve Problem~\ref{wedges} in $O(n\log^2 n)$ time.

\begin{figure}[ht]
  \begin{center}
  \includegraphics[width=0.8\textwidth]{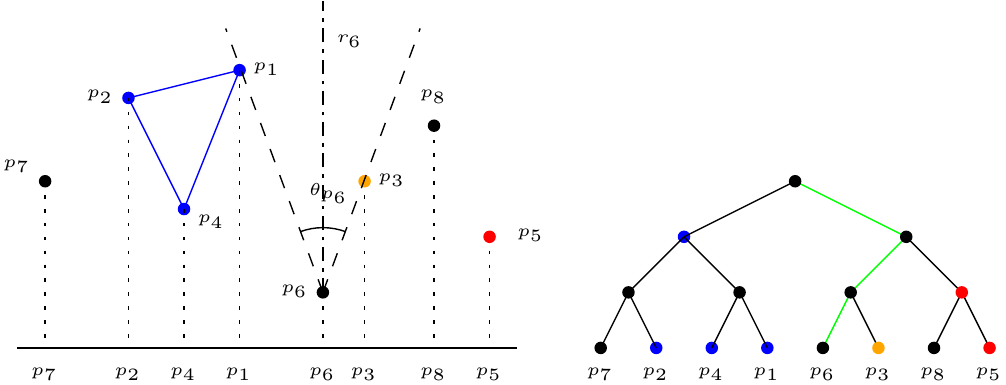}
  \end{center}
  \caption{Processing $p_6$. At this point, $p_7$ and $p_8$ are inactive, thus they are shown in black.}\label{fig:overlap6_12}
\end{figure}

Let $D(Q)$ be a balanced binary tree whose leaves are the elements of $Q$ sorted in order from left to right according to their $x$-coordinate. Note that this order does not necessarily coincide with the labeling $q_1,\ldots,q_n$ of the elements of $Q$. Initially, every leaf of $D(Q)$ is considered \emph{inactive}. For each vertex of $D(Q)$ we maintain the convex hull of $W(q)$, where $W(q)$ is the set of descendant leaves (points of $Q$) that are active.

For each $i$, from $i=1,\ldots,n$ we execute the following algorithm:
\begin{itemize}
\item Consider the nodes of $D(Q)$ that are direct descendants of nodes in the path $p_{q_i}$ connecting $q_i$ to the root of $D(Q)$. By Observation~\ref{partition}, the active descendants of these nodes form a partition $W_i,\ldots,W_k$ of the set of active leaves of $D(Q)\setminus \{q_i\}$, and each of these sets is contained to the left or the right of the vertical line through $q_i$. Moreover we know the convex hulls of each of $W_i,\ldots,W_k$. Thus, we can calculate their supporting lines passing through $q_i$ in overall $O(\log^2 n)$ time.

\item Make $q_i$ active, and update the convex hulls stored at the vertices of the path joining $q_i$ to the root of $D(Q)$. This can be done in logarithmic time per node of the path, and overall $O(\log^2 n)$ time.
\end{itemize}

Observe that the convex polygon associated with the root of $D(Q)$ can be of size $n$. For the vertices in the next level of $D(Q)$, the sum of the sizes of the convex polygons associated to them is $n$, and in general, the sum of the polygons associated to the vertices of $D(Q)$ is $n$. Since the number of levels of $D(Q)$ is $O(\log n)$, the space used is $O(n\log n)$. \qed
\end{proof}

By using the results in Theorem~\ref{teo3} we can calculate the set of intervals at which all of the points in $P$ are up-active. In a similar way we can determine the intervals for which the points in $P$ are down-active; thus we have the following.

\begin{theorem}
The set of intervals at which the points of $P$ are $\theta$-active can be computed in $O(n\log^2 n)$ time and $O(n\log n)$ space.
\end{theorem}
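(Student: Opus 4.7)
The plan is to reduce this to multiple applications of Theorem~\ref{teo3}. By Theorem~\ref{threeintervals}, any up-active wedge at $p_i$ has angular size at least $\pi/2$ and hence must contain at least one of the four axis-parallel rays $X_i^+, X_i^-, Y_i^+, Y_i^-$ emanating from $p_i$. So it suffices, for each $p_i$ and each of these four rays, to find the two points of $P'_i$ obtained by rotating the ray clockwise and counter-clockwise until it hits a point of $P'_i$; the angle $\alpha$ between them determines whether the corresponding wedge is $P'_i$-free of size at least $\pi/2$, and if so yields an interval of $\theta$-values of length $\alpha - \pi/2$ during which $p_i$ is up $\theta$-active along that ray.

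I would sort $P$ by decreasing $z$-coordinate as $p_1,\ldots,p_n$ and process them in this order, maintaining the projection onto the $XY$-plane of the points seen so far. Theorem~\ref{teo3} provides the required data structure for vertical rays: the tree keyed by $x$-coordinate handles $Y_i^+$ and $Y_i^-$ in one query. I would build a second instance of the same structure keyed by $y$-coordinate; this handles the horizontal rays $X_i^+$ and $X_i^-$ after swapping the roles of the two axes, which the algorithm tolerates by symmetry. For each $p_i$, I query both structures at the projection of $p_i$, obtaining the two angularly nearest points along each of the four rays in $O(\log^2 n)$ time, and then insert $p_i$'s projection into both structures, also in $O(\log^2 n)$ time. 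The down-active intervals are computed identically, but processing $P$ in increasing order of $z$-coordinate.

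Summing over all $p_i$ yields $O(n\log^2 n)$ time and $O(n\log n)$ space. The main obstacle is bookkeeping rather than algorithmic: from the two neighbors bounding an empty wedge around an axis-ray, one must translate the angular range into a precise interval of $\theta$-values (recalling that the $p_i^{\theta}$-octants rotate rigidly with $\theta$, so the resulting interval has length $\alpha-\pi/2$ and its endpoints are determined by the orientations of the two bounding points relative to $p_i$), and one must merge the at most four candidate intervals per point into the at most three disjoint intervals predicted by Theorem~\ref{threeintervals}. Both tasks take constant time per point, so they do not affect the overall complexity, and combining the up-active and down-active outputs yields the claimed bounds.
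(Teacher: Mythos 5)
Your proposal is correct and follows essentially the same route as the paper: reduce the detection of a $P'_i$-free wedge of angular size at least $\pi/2$ to queries along the four axis-parallel rays, answer those queries with the dynamic structure of Theorem~\ref{teo3} (plus a symmetric copy for the horizontal rays) while sweeping $P$ by decreasing $z$ for up-active and increasing $z$ for down-active intervals, and convert each empty wedge of size $\alpha\geq\pi/2$ into a $\theta$-interval of length $\alpha-\pi/2$. The only nitpick is attribution: the fact that such a wedge must contain one of the four rays comes from the discussion preceding Problem~\ref{wedges}, not from Theorem~\ref{threeintervals}, which only bounds the number of resulting intervals.
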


Observe that the set of intervals at which two points of $P$ are active define intervals in the unit circle $C$, where the points on $C$ correspond to angles in $[0,2\pi]$. Thus, if an angle $\theta$ belongs to an interval at which a point of $P$ is active, this point is a vertex of $RCH_{\theta}(P)$. As $\theta$ goes from $0$ to $2\pi$, the vertices of $RCH_{\theta}(P)$ are those for which one of its active angular intervals contains $\theta$.

As a consequence of the above discussion we have the following result.

\begin{theorem}\label{teo4}
Given a set $P$ of $n$ points in 3D, maintaining the elements of $P$ that belong to the boundary of $RCH_{\theta}(P)$ as $\theta\in [0,2\pi]$ can be done in $O(n\log^2 n)$ time and $O(n\log n)$ space.
\end{theorem}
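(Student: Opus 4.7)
The plan is to reduce Theorem~\ref{teo4} to the result stated just above it, which already guarantees that the set of angular intervals during which each $p_i\in P$ is $\theta$-active can be computed in $O(n\log^2 n)$ time and $O(n\log n)$ space. Given those intervals, a point $p_i$ is a vertex of $RCH_{\theta}(P)$ precisely when $\theta$ lies in one of its active intervals, so maintaining the vertex set as $\theta$ sweeps through $[0,2\pi]$ amounts to an event-driven simulation along the unit circle.

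Concretely, I would proceed as follows. First, invoke the preceding theorem to obtain, for every $p_i$, its up- and down-active intervals, of which there are $O(1)$ per point by Theorem~\ref{threeintervals}; this produces a list of $O(n)$ endpoint events, each labeled with the point it concerns and tagged as ``activation'' or ``deactivation''. Second, sort all endpoints by angle in $O(n\log n)$ time, which is dominated by the preprocessing. Third, initialize an active-vertex structure (say a balanced search tree, or a doubly linked list with back-pointers from the event list) with those points whose intervals contain $\theta=0$, and then sweep $\theta$ from $0$ to $2\pi$: at each event, insert or delete the corresponding point in $O(\log n)$ time. At any moment the contents of the structure coincide with the vertex set of $RCH_{\theta}(P)$.

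The space is $O(n\log n)$, dominated by the data structure of Theorem~\ref{teo3} used in the preprocessing; the active-set structure adds only $O(n)$. The total time is $O(n\log^2 n)$, again dominated by the preprocessing, with the sorting and sweeping contributing an additional $O(n\log n)$.

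The main obstacle is not in this wrapper but in establishing the preceding theorem, and in particular in reusing Theorem~\ref{teo3}: its data structure is stated only for the upward vertical ray, yet identifying the $O(1)$ active intervals of each point requires analogous queries from each of the four axis-parallel rays $X_i^{+}$, $X_i^{-}$, $Y_i^{+}$, $Y_i^{-}$, and separately for the up and down halves. Four rotated copies of $D(Q)$, or a single tree whose internal nodes store enough information to extract the supporting tangents in all the relevant directions, resolve this at the cost of only constant-factor overhead, preserving the $O(n\log^2 n)$ time and $O(n\log n)$ space bounds and thus completing the proof of Theorem~\ref{teo4}.
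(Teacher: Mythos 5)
Your proposal is correct and follows essentially the same route as the paper: both reduce Theorem~\ref{teo4} to the preceding theorem on computing the $O(1)$ active intervals per point via Theorem~\ref{teo3}, and then observe that the vertex set at any $\theta$ is read off from which intervals contain $\theta$. Your explicit event-driven sweep and the remark about instantiating the data structure for all four rays and the up/down cases are just concretizations of steps the paper leaves implicit.
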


Note that if we store the set of angular intervals at which the points of $P$ are active, then we can, for any angle $\theta$ retrieve the points in $P$ that are $\theta$ active in linear time. In case that we want to compute the $RCH_{\theta}(P)$ for a particular value of $\theta$, all we have to do is to retrieve that $\theta$-active points of $P$ in linear time, and use the algorithm presented in Theorem~\ref{teo2}.

\section{The Combinatorics of Rectilinear Convex Hulls in $\mathbb{R}^3$}

In the previous section, we studied the problem of maintaining the set of points of $P$ that are vertices of $RCH_{\theta}(P)$.
The problem of maintaining $RCH_{\theta}(P)$ does not follow from our previous results. As we shall see, there are examples of point sets $P\subset\mathbb{R}^3$ such that the number of combinatorially different rectilinear convex hulls of $P$ can be at least $\Omega(n^2)$ while the set of vertices of $RCH_{\theta}(P)$ remains unchanged.

To begin, we notice that $RCH_{\theta}(P)$ is not necessarily connected; this is easy to see, as for point sets in the plane this property does not hold; e.g., see Figure~\ref{fig:rcht} right. What is a bit more interesting is that even when $RCH_{\theta}(P)$ is connected and has non-empty interior, it is not necessarily simply connected. In Figure~\ref{torus} we show a rectilinear convex hull of a point set whose rectilinear convex hull is a torus. The elements of $P$ are the vertices of the cubes glued together to obtain the figure. The reader will notice immediately that the points in $P$ are not in general position. A slight perturbation of the elements of $P$, that would bring them to a point set in general position, will also yield a rectilinear convex hull whose interior is a torus. Evidently, we can construct similar examples to that shown in Figure~\ref{torus} in which we obtain oriented surfaces of arbitrarily large genus.

\begin{figure}[ht]
  \begin{center}
  \includegraphics[width=0.6\textwidth]{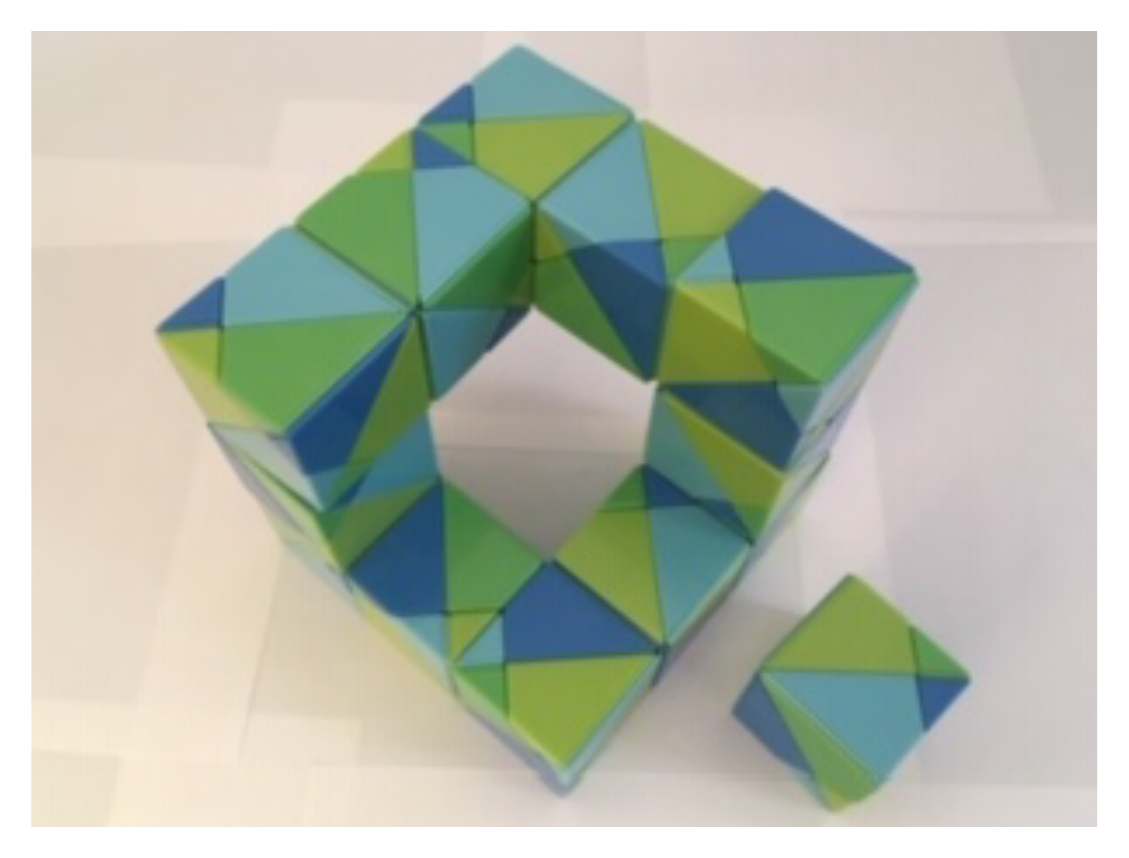}
  \end{center}
  \caption{The rectilinear convex hull of a point set that is not simply connected.}\label{torus}
\end{figure}

We now construct a point set such that for an angular interval $[\alpha,\beta]$ while $\theta\in[\alpha,\beta]$, $RCH_{\theta}(P)$ will maintain the same vertices while it changes a quadratic number of times.

Consider a circular cylinder $\mathcal{C}$ that is perpendicular to the $XY$-plane, and consider a geodesic curve $\mathcal{H}$ on $\mathcal{C}$ that joins two points $p$ and $q$ on $\mathcal{C}$. Choose a set of $n$ points $P'=\{p'_1,\ldots,p'_n\}$ on $\mathcal{H}$ such that their projection on the $XY$-plane is a set of equidistant points on a small interval of the circle in which $\mathcal{C}$ and the $XY$-plane intersect, and such that if $i<j$ the $z$-coordinate of $p_i$ is smaller than the $z$-coordinate of $p_j$; see Figure~\ref{pfreeoctants}.

\begin{figure}[ht]
  \begin{center}
  \includegraphics[width=0.8\textwidth]{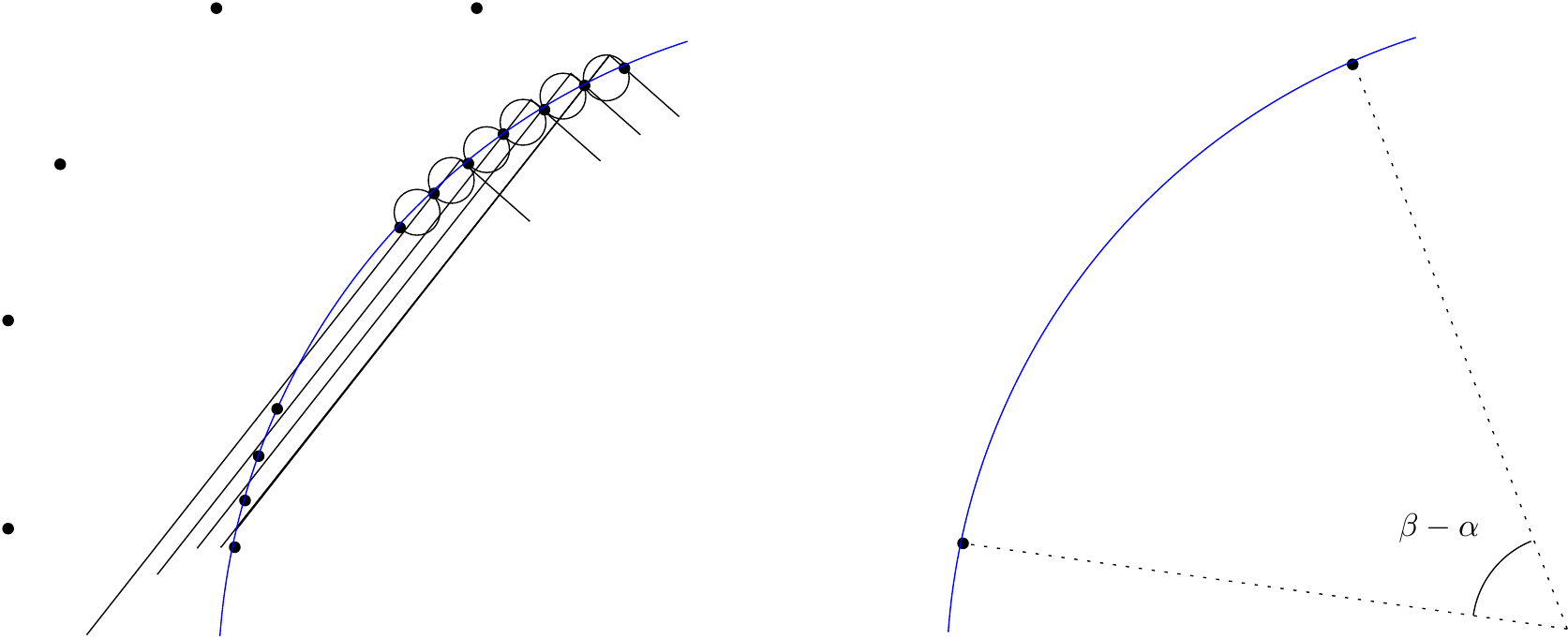}
  \end{center}
  \caption{A configuration of points that illustrates a point set for which its set of vertices remains unchanged while its rectilinear convex hull changes a quadratic number of times.}\label{pfreeoctants}
\end{figure}

Observe that there is a $P'$-free up octant $Q_{1,n}$ whose bottom, left, and back faces contain $p'_1$, $p'_{n-1}$ and $p'_{n}$. Observe now that there is a translation of $Q_1$ up and to the left so as to produce a second octant $Q_{1,n-1}$ whose bottom, left and back faces contain $p'_2$, $p'_{n-1}$, and $p'_{n-2}$. We can iterate this process $\frac{n-1}{2}$ times to obtain a set of  $\frac{n-1}{2}$ $P'$-free octants that have on their bottom, left and back faces $p'_1$, $p'_i$ and $p'_{i+1}$, $i=\lfloor\frac{n-1}{2}\rfloor,\ldots,n-1$; see Figure~\ref{pfreeoctants}.

Rotating $Q_2$ slightly in the clockwise direction and moving it up we can obtain a $P$-free octant $Q_{2,n}$ containing on its bottom, left, and back faces $p'_2$, $p'_n$, and $p'_{n-1}$. We can now repeat the same process as we did with $\{p'_1,\ldots,p'_n\}$, and with $\{p'_2,\ldots,p'_n\}$, starting with $p'_2$ and $Q_{2,n}$ to obtain a new set of $P'$-free extremal octants. Repeat this process with $\{p'_3,\ldots,p'_n\}$,\ldots, $\{p'_{n-3},\ldots,p'_n\}$, to obtain a quadratic number of $P'$-free \emph{extremal} $P'$-octants. Let $\alpha$ and $\beta$ be the angles of rotation of the $XY$-plane such that the $x$-axis becomes parallel to the line at which the back faces of $Q_{1,n}$ and $Q_{n-3,n}$ intersect the $XY$-plane. Observe that all of $p'_1, \ldots , p'_n$ are active points and on the boundary of $RCH_{\theta}$ for all $\theta\in[\alpha,\beta]$. In the meantime, all of the $P$-free octants we obtained above become active during an angular interval contained in $[\alpha,\beta]$. To complete our construction, we add a few points to the set $\{p'_1,\ldots,p'_n\}$. These points are located behind the circular cylinder $\mathcal{C}$, placed appropriately to ensure that the rectilinear convex hull of the point set $P$ thus obtained has non-empty interior, and all of $\{p'_1,\ldots,p'_n\}$ are on its boundary. Thus, the rectilinear convex hull of $P$ changes a quadratic number of times while its vertex set remains unchanged.

\begin{theorem}
There are configurations of points in $\mathbb{R}^3$ such that for an angular interval $[\alpha,\beta]$ while $\theta\in [\alpha,\beta]$, $RCH_{\theta}(P)$ will maintain the same vertices while it changes a quadratic number of times.
\end{theorem}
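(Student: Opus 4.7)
The plan is to formalize the explicit construction sketched in the paragraphs preceding the statement. Place $n$ points $P' = \{p'_1, \ldots, p'_n\}$ on a short geodesic arc of a circular cylinder $\mathcal{C}$ perpendicular to the $XY$-plane, arranged so that their projections onto the $XY$-plane are equally spaced along a tiny arc of the base circle, and their $z$-coordinates are strictly increasing in the index. Add a constant number of auxiliary points placed far behind $\mathcal{C}$ and in a narrow $z$-band below $z_{p'_1}$ to ensure that the final hull $RCH_\theta(P)$ has nonempty interior. The target is to display $\Omega(n^2)$ combinatorially distinct $P$-free extremal octants that all activate as $\theta$ ranges over a short interval $[\alpha,\beta]$, while every $p'_i$ remains a vertex of $RCH_\theta(P)$ throughout that interval.

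For the quadratic count I would produce the doubly-indexed family $\{Q_{i,j}\}$ with $1\le i<j<n$ exactly as described in the preamble: for each pair, construct an up-octant whose bottom face passes through $p'_i$ and whose left and back faces pass through the consecutive pair $p'_j, p'_{j+1}$. Because the horizontal projections are equidistant and the heights strictly monotone, a direct monotonicity argument shows that the apex of $Q_{i,j}$ varies monotonically with $(i,j)$, which forces (i) each $Q_{i,j}$ to be $P'$-free since no other $p'_k$ can slip inside, and (ii) distinct pairs yield distinct supporting triples $\{p'_i, p'_j, p'_{j+1}\}$, hence combinatorially distinct extremal octants. Iterating for each valid $i$ up to roughly $n-3$ yields $\Omega(n^2)$ distinct octants. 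I then take $\alpha,\beta$ to be the rotation angles at which the back faces of $Q_{1,n}$ and $Q_{n-3,n}$ become axis-aligned; the short-arc placement forces all activation angles in the family to fall inside $[\alpha,\beta]$, so the combinatorial type of $RCH_\theta(P)$ changes at least $\Omega(n^2)$ times on that interval.

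For vertex invariance over $[\alpha,\beta]$, I would observe that each $p'_i$ is always the apex of some $P$-free $p'_i{}^\theta$-octant, namely an open octant opening outward from $\mathcal{C}$ through the gap left by the cylinder; by the geometry, that octant contains neither other points of $P'$ nor the auxiliary cluster for all $\theta$ in a neighborhood of $[\alpha,\beta]$. Hence every $p'_i$ stays on the boundary of $RCH_\theta(P)$ throughout the interval, so the vertex set is constant while the hull keeps changing combinatorially, which is what the theorem asserts.

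The main obstacle is the bookkeeping needed to certify the $\{Q_{i,j}\}$ octants as simultaneously $P'$-free and pairwise combinatorially distinct. Both properties reduce to the monotonicity of apex positions as $(i,j)$ varies, which in turn follows from the equidistant short-arc horizontal placement combined with strictly monotone heights along the helix. A secondary subtlety is choosing the auxiliary cluster so that it neither destroys any $Q_{i,j}$ nor dominates any $p'_i$ in any of the eight partial orders $\preceq_1,\ldots,\preceq_8$; placing it sufficiently far behind $\mathcal{C}$ and at heights in a small band below $z_{p'_1}$ handles both constraints simultaneously, after which the genericity perturbation mentioned in the torus example can be applied to move $P$ into general position without losing any of the $\Omega(n^2)$ combinatorial changes.
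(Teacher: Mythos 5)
Your proposal is correct and follows essentially the same construction as the paper: points on a geodesic arc of a vertical cylinder with equidistant projections and increasing heights, the doubly-indexed family of $P'$-free octants supported by triples $\{p'_i,p'_j,p'_{j+1}\}$, the interval $[\alpha,\beta]$ determined by the back faces of $Q_{1,n}$ and $Q_{n-3,n}$, and auxiliary points behind the cylinder to guarantee nonempty interior. Your added monotonicity argument for freeness and distinctness is a reasonable formalization of what the paper leaves implicit.
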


\section{Final Remarks and Future Lines of Research}\label{sec4}

We have shown how to calculate the rectilinear convex hull of a point set in $O(n\log n)$ time. The rectilinear convex layers of a point set in $\mathbb{R}^3$ are defined in a recursive way as follows: calculate $RCH^3(P)$, and remove the elements of $P$ in $RCH^3(P)$. It is clear that the rectilinear convex layers of $P$ can be computed in a recursive way by removing the rectilinear convex hull of the point set until $P$ becomes empty. If $P$ has $k$ rectilinear convex layers, this can be done in $O(kn\log n)$ time and $O(n)$ space. We conjecture that there exists an algorithm to find the rectilinear convex layers of $P$ in better than $O(kn\log n)$ time and $O(n)$ space.

We proved that the number of times that the set of vertices of the rectilinear convex hull of a point set $P$ changes while $\mathbb R^3$ is rotated around the $z$-axis is linear; however, the rectilinear convex hull of $P$ may change a quadratic number of times. A future line of research is that of obtaining efficient algorithms to maintain the rectilinear convex hull of $P$ in time proportional to the number of times it changes times a logarithmic factor.

Finally, we remark that obtaining the rectilinear convex hull of a point set, when we rotate $\mathbb R^3$ around any line through the origin, can be done trivially in $O(n\log n)$ time, as any such rotation can be achieved as a composition of a rotation around the $z$-axis followed by a rotation around the $x$-axis.




\end{document}